\def\R{{\mathbb R}}
\begin{document}
\title{Note on VCG vs. Price Raising\\ for Matching Markets}

\author{Walter Kern, Bodo Manthey, Marc Uetz}
\institute{University of Twente, P.O.Box 217, NL-7500 AE Enschede\\
\texttt{w.kern@utwente.nl}
}

\maketitle

\begin{abstract}
In \cite{EK10} the use of VCG in matching markets is motivated by saying that in order to compute market clearing prices in a matching market, the auctioneer needs to know the true valuations of the bidders. Hence VCG and corresponding 
personalized prices are proposed as an incentive compatible mechanism. The same line of argument pops up in several lecture sheets and other documents related to courses based on Easley and Kleinberg's book, seeming to suggest that computing
market clearing prices and corresponding assignments were \emph{not} incentive compatible. Main purpose of our note is to observe that, in contrast, assignments based on buyer optimal market clearing prices are indeed incentive compatible.\footnote{
In \cite{EK10}(section 15.9) it is shown that the VCG prices paid by the buyers are market clearing, but the conclusion that, conversely, buyer optimal market clearing prices are incenticve compatible, is missing.}
\end{abstract}

{\bf keywords:} matching markets, market clearing prices, VCG mechanism.

\section{Introduction}\label{s-intro}

The idea of market clearing prices seems to exist since more than 200 years (\emph{cf. \cite{Say34})}: Prices of products are expected to adapt to the demand such that the demand is satisfied without any leftover products. A rigorous mathematical
treatment in the special (and particularly simple) case of a ``two-sided market'' is 
presented in Shapley and Shubik's paper \cite{SS72}. The setup there consists of a set $I$ of \emph{items} (or ``sellers''), a set $J$ of \emph{buyers} and
given \emph{valuations} $v_{ij} \ge 0$ indicating the value that item $i$ has for buyer $j$. One may assume \emph{w.l.o.g.} that $|I|=|J|$ (otherwise add dummy items of value $0$ for all buyers or dummy buyers with zero valuations for all items
if necessary). Relative to given prices $p_i, i \in I$, buyer $j$ will only be interested in items that yield him a maximum profit $v_{ij}-p_i$. Let us call these items ``acceptable'' (for buyer $j$ \emph{w.r.t.} given prices $p_i$). The price vector
$p= (p_i)$ is \emph{market clearing} if there is a perfect matching of the buyers with the items such that each buyer is
assigned to (matched with) an acceptable item.

Market clearing prices are easy to compute (\emph{cf.} \cite{EK10} or section \ref{s-clearing}), even if we impose the
additional requirement that $\sum_i p_i$ should be minimal. Such \emph{buyer optimal} market clearing prices turn out to be
unique (\emph{cf.} \cite{SS72} or section \ref{s-clearing}).

The above setup is also interesting in the context of \emph{sealed bid auctions}, where buyers are called \emph{bidders}:
Assume that an \emph{auctioneer} is to assign
the items $i \in I$ to the buyers $j \in J$ and compute prices $p_i$ to be paid by the buyers for the items they get assigned to. The auctioneer employs a \emph{mechanism} performing this job (computing an assignment and corresponding prices 
based on the reported valuations $v_{ij}, i \in I, j \in J$). 
Such a mechanism  is called \emph{incentive compatible} if no bidder has any reason to lie about his valuations, \emph{i.e.} if telling his true valuations $v_{ij}, i \in I$ maximizes the profit $v_{ij}-p_i$ of bidder $j$ 
(where $i$ is the item he gets assigned to), irrespective of what the other bidders do. In game theoretic terms one would say that truthtelling
is a \emph{dominant strategy} for each bidder $j$.
A famous mechanism with this property is the so-called \emph{VCG mechanism}, based on work by  to Vickrey \cite{V61}, Clarke \cite{C71} and Groves \cite{G73}. The basic idea is that each bidder should pay for ``the harm it does to the others'' 
(\emph{cf.} section \ref{s-VCG} for details).\\

In \cite{EK10}, the use of the VCG-mechanism for matching markets (in particular sponsored search) is motivated by 
saying that computation of market clearing prices ``can only be carried out by a search engine that actually knows the valuations'' $v_{ij}$ of the bidders.  VCG is known to be incentive compatible (even in a more general setting,
\emph{cf.} \cite {N07}), so bidders can indeed be assumed to report their true valuations, once they know that VCG is applied 
to compute the assignment and corresponding prices.\\

As it turns out, however, these VCG prices are exactly the buyer optimal prices $p_i$ which we could equally well compute by the standard price raising procedure (\emph{cf.} \cite{EK10} or section \ref{s-clearing}) 
and subsequent reduction to buyer optimal prices. 
The purpose of our note is to stress that - at least in the context of matching markets - VCG is just an alternative way of defining buyer optimal prices, or, to put it in another way, a means of showing
that buyer optimal prices are incentive compatible.\\

We do not claim that our findings (in particular Lemma \ref{buyeropt=VCG}) are completely new, but on the other hand we also did could not find them anywhere
mentioned explicitly in the literature. Actually, we only found one single document pointing out an equivalence between VCG and market
clearing prices (\cite{A13}) and this completely ignores the crucial  buyer optimality. All others adopted
the argument from Easley and Kleinberg's book mentioned above.\\

In the following two sections we discuss market clearing prices, buyer optimality, and the equivalence with VCG in the context of matching markets. We aimed at a mainly selfcontained presentation (assuming only basic knowledge on
linear programming and bipartite matchings).

\section{Market clearing prices}\label{s-clearing}
Let $G=(I\cup J, E)$ be the complete bipartite graph with node sets $I$ (\emph{items}) and $J$ (\emph{buyers}),  and assume \emph{w.l.o.g.} that $|I|=|J|=n$, say. (Otherwise, add \emph{dummy} items of value $0$ to all players or 
\emph{dummy} buyers with valuation $0$ for all items.) Let $v_{ij} \ge 0, i \in I, j\in J$ denote the valuation of buyer $j \in J $ for item $i\in I$. A \emph{matching} in $G$ is a subset $M \subseteq E$ of paiwise non-incident edges. $M$ is 
\emph{perfect} if it matches all nodes, \emph{i.e.,} $|M|=n$. The \emph{value} of a matching $M$ is defined as
$v(M)= \sum_{ij \in M} v_{ij}$. We let 
$$v^* =v^*(G) = \max\{v(M)~|~ M \text{is perfect matching in } G\}.$$

It is well-known that $v^*$ and a corresponding optimal matching $M^* \subseteq E$ can be found by solving the following linear program $(P)$ with variables $x_{ij}$ indicating whether $ij \in M (x_{ij}=1)$ or not $(x_{ij}=0$ and its dual $(D)$:
\begin{eqnarray}
(P)~~~~~~~v^*= \max & \sum_{ij \in E} v_{ij}x_{ij} ~~~~~~~~~~~= ~~~~~ \min &\sum_i p_i +\sum_j q_j ~~~~~~~(D) \nonumber \\
& \sum_j x_{ij}=1, ~~i \in I~~~~~~~~~~ &~~ p_i+q_j \ge v_{ij}\nonumber\\
& \sum_i x_{ij}=1, ~~j \in J~~~~~~~~~~ &\nonumber
\end{eqnarray}

The dual variable $p_i$ may be interpreted as \emph{price} of item $i$ (see also below). To each price vector $p \in \R^I$ there is a corresponding best choice for $q$, obtained by setting $q_j:= \max_i v_{ij}-p_i$. For this reason we
also refer to a given price vector $p$ as a ``dual solution'' (without mentioning the corresponding $q$). Clearly, 
we may - and will - always assume $p \ge 0$ \emph{w.l.o.g.} (by adding a suitable amount to all components of $p$ and subtracting the same amount from $q$).\\

The \emph{price raising} procedure (\emph{cf., e.g.,} \cite{EK10}), in discrete optimization also known as \emph{Hungarian Method} (\cite{Kuhn51}, \emph{cf.} also \cite{Hi41}) is a rather efficient combinatorial algorithm for solving $(P)$ and $(D)$. 
It is a special case of a primal-dual algorithm that subsequently modifies a feasible dual solution $p$ until a complementary (hence optimal) primal solution is found:\\

Given a feasible dual solution $(p,q)$, define the corresponding set of \emph{equality edges}
$$E^= := \{e=ij \in E~|~ p_i+q_j=v_{ij}\}.$$
By complementary slackness, $(p,q)$ is optimal if there is a corresponding perfect matching $M \subseteq E^=$, 
\emph{i.e.,} a $0-1$ solution of $(P)$ with $x_{ij}=1 \Rightarrow p_i+q_j=v_{ij}$. Interpreting $p_i$ as price of item
$i$, such a matching $M$ assigns each buyer $j$ to an item $i$ with profit $q_j=v_{ij}-p_i$ for buyer $j$. Dual feasibility ensures that any other item $\tilde{i}$ would result in a lower or equal profit $v_{\tilde{i}j}-p_{\tilde{i}}$.
(This is why $G^==(I\cup J, E^=)$ is called ``best sellers graph'' sometimes (\emph{cf., e.g.,} \cite{EK10}).\\

Starting, say, from $p=0$, we successively seek to modify the current dual solution $(p,q)$ such that $G^= (I\cup J, E^=)$ contains larger and larger matchings $M \subseteq E^=$ in each iteration, until eventually a perfect matching
is reached (and hence an optimal solution is found).\\

Thus assume $(p,q)$ is a current dual feasible solution and $M \subseteq E^=$ is a maximum (non-perfect) matching, \emph{i.e.,} there exists an unmatched $j \in J$. Let $R\subseteq I$ denote the set of nodes that can be reached from $j$ along an $M$-alternating path in $G^=$ (as indicated in the figure below).\\[2mm]

 \begin{figure}\label{f-fig1} \includegraphics[height=20mm,width=30mm]{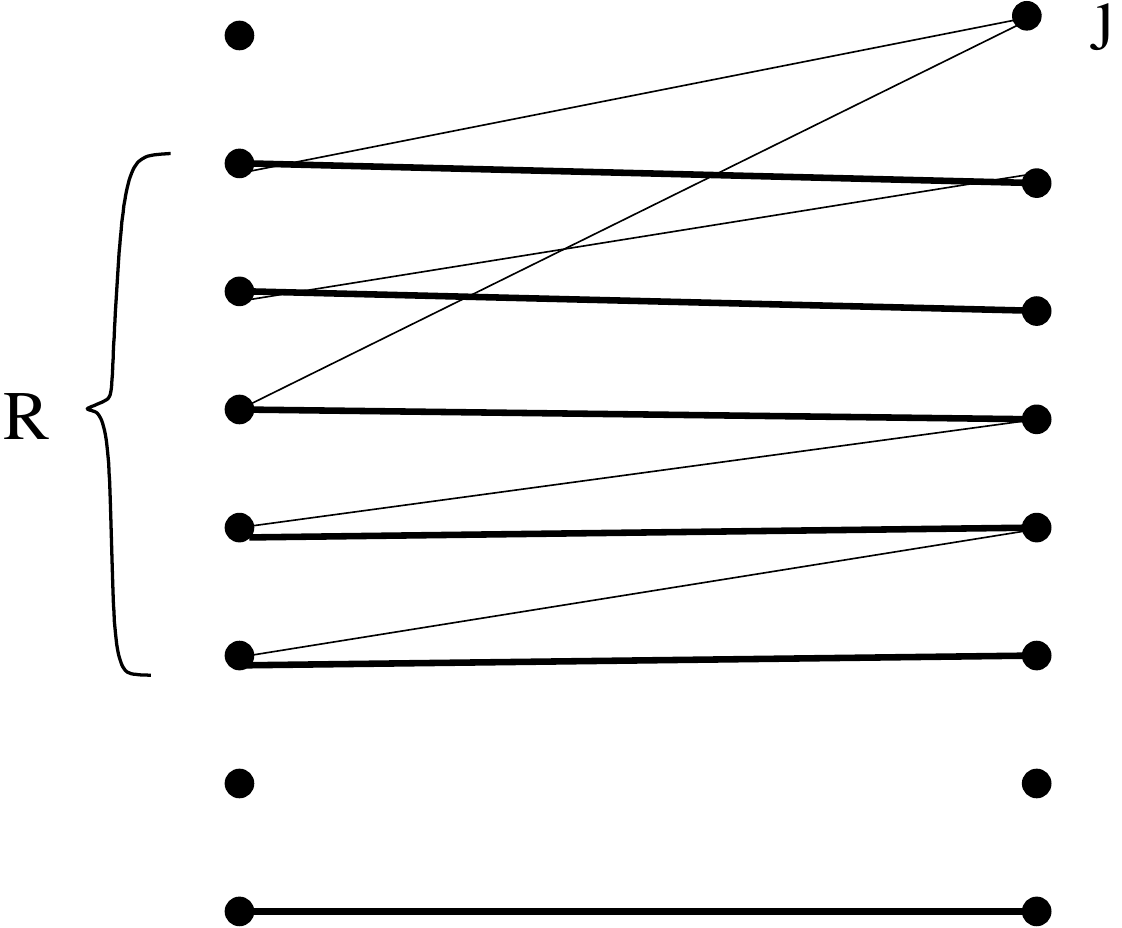}\end{figure}

Each $i \in R$ is matched by $M$ (otherwise the $M$-alternating path from $j$ to $i$ were $M$-augmenting, contradicting our assumption that $M$ is maximum). Moreover, the matching partners $M_R \subseteq J$ of $R$ are not joined to any
node $i \in I\backslash R$ in $G^=$ (any such $i$ would belong to $R$ by definition). Hence we may increase $p$ on $R$ and decrease $q$ on $M_R \cup \{j\}$, maintaining dual feasibility, until some $e \in E$ joining $I \backslash R$ to $M_R$ becomes an equality edge and we update (enlarge $R$ correspondingly. After at most $n=|I|$ steps, a new dual feasible $(\tilde{p}, \tilde{q})$ will be found, allowing a larger ``primal'' matching $\tilde{M}$ of size $|\tilde{M}|=|M|+1$ in the corresponding $\tilde{G}^=$.\\

After at most $n=|I|$ such iterations, we eventually end up with a dual feasible $(p,q)$ and a primal perfect matching
$M \subseteq E^=$ in the corresponding $G^=$. As explained earlier, given prices $p_i$, buyer $j$ is ready to buy any
 item $i$ with $ij \in E^=$ as this will maximize his profit. This explains the notion of \emph{market clearing} prices
for the optimal dual solutions on $D$ (as these are exactly the ones that allow a complementary perfect matching $M
\subseteq E^=$).\\

Market clearing prices are by no means unique. For example, raising all prices by $1$ (and decreasing $q$ correspondingly) does not affect optimality. However, there is a unique optimal dual solution $p$ that is \emph{buyer optimal} in that
it minimizes $\sum_i p_i$ (recall that we assume all prices to be non-negative). This is an immediate consequence of 
Shapley and Shubik's \cite{SS72} ``lattice structure'' Theorem:
\begin{theorem}\label{ShSh}
If $p',p'' \in \R^n_+$ are market clearing, so is $p_i := \min\{p'_i,p''_i\}, i \in I$.
\end{theorem}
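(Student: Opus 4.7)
The plan is to prove this via linear-programming duality. As pointed out just before the theorem, a non-negative price vector $p$ is market clearing exactly when $(p,q)$, with $q_j:=\max_i(v_{ij}-p_i)$, is an optimal solution of the dual $(D)$ with value $v^*$. So let $(p',q')$ and $(p'',q'')$ be the optimal dual pairs associated with the given market clearing vectors $p'$ and $p''$.

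Define the coordinate-wise meet and join
\[
p_i:=\min\{p'_i,p''_i\},\quad \tilde q_j:=\max\{q'_j,q''_j\},\qquad
\bar p_i:=\max\{p'_i,p''_i\},\quad \bar q_j:=\min\{q'_j,q''_j\}.
\]
The first step is to check that both $(p,\tilde q)$ and $(\bar p,\bar q)$ are dual feasible. For each pair $(i,j)$ there are four cases according to which of $p',p''$ attains the min (resp.\ max) and which of $q',q''$ attains the max (resp.\ min). The only non-trivial ones are the ``mixed'' cases: if, say, $p_i=p'_i$ and $\tilde q_j=q''_j$, then $p'_i\le p''_i$ and $q''_j\ge q'_j$, so $p_i+\tilde q_j=p'_i+q''_j\ge p'_i+q'_j\ge v_{ij}$, using feasibility of $(p',q')$. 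The other mixed case is symmetric, and the two ``pure'' cases inherit feasibility directly.

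The second, key step is the coordinate-wise identity $\min(a,b)+\max(a,b)=a+b$, which gives
\[
\Bigl(\textstyle\sum_i p_i+\sum_j \tilde q_j\Bigr) + \Bigl(\sum_i \bar p_i+\sum_j \bar q_j\Bigr) = \sum_i(p'_i+p''_i)+\sum_j(q'_j+q''_j) = 2v^*.
\]
By weak duality each of the two summands is at least $v^*$, hence both must equal $v^*$. So $(p,\tilde q)$ is an optimal dual solution, and complementary slackness with any optimal primal matching provides a perfect matching $M\subseteq E^=(p)$. Thus $p=\min(p',p'')$ is market clearing, as claimed.

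The main obstacle is spotting this ``sum-doubling'' trick, which upgrades mere feasibility of the meet and join pairs to joint optimality in one stroke. The alternative, more combinatorial route of trying to construct a perfect matching $M\subseteq E^=(p)$ directly from $M'\subseteq E^=(p')$ and $M''\subseteq E^=(p'')$ by analysing the alternating cycles of $M'\triangle M''$ appears considerably more intricate, since on a cycle that mixes coordinates with $p_i=p'_i$ and those with $p_i=p''_i$ one cannot simply inherit the $M'$- or $M''$-partner without checking how prices on the opposite side of the cycle change.
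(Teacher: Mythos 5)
Your proof is correct. The first half---dual feasibility of the meet $(p,\tilde q)$ with $\tilde q_j=\max\{q'_j,q''_j\}$, checked by the two-case (pure/mixed) argument---is exactly the paper's first step. Where you diverge is in how optimality of $(p,\tilde q)$ is established. The paper never introduces the join $(\bar p,\bar q)$: it simply takes any maximum-valuation perfect matching $M$ and notes that complementarity of $M$ with \emph{both} optimal pairs forces $p'_i+q'_j=p''_i+q''_j=v_{ij}$ on every edge $ij\in M$, so that the minimum among $p'_i,p''_i$ and the maximum among $q'_j,q''_j$ are attained by the \emph{same} one of the two solutions, giving $p_i+\tilde q_j=v_{ij}$ on $M$ directly; thus $M\subseteq E^=$ for the new prices and one is done in a line. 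Your route instead verifies feasibility of the join as well and then uses the identity $\min(a,b)+\max(a,b)=a+b$ together with weak duality to conclude that both the meet and the join are optimal, recovering the complementary matching afterwards. Both arguments are sound and short; the paper's is slightly more economical (no join, no objective-value bookkeeping) and yields the complementary matching explicitly, while yours buys the extra conclusion that $\max\{p'_i,p''_i\}$ is \emph{also} market clearing, i.e.\ the full lattice structure of Shapley--Shubik rather than only the half needed here. Your closing remark that a purely combinatorial construction of the matching from $M'\triangle M''$ would be more delicate is fair, and neither proof needs it.
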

\begin{proof}
It is straightforward to verify that $p$, together with $q_j := \max\{q'_j,q''_j\}$  yields a feasible dual solution.
(Here $q',q''$ are the dual variables corresponding to $p',p''$.) Furthermore, if $M$ is any maximum valuation perfect matching, then $M$ must be complementary with both $(p',q')$ and $p'',q'')$. Thus $p'_i+q'_j=p''_i+q''_j=v_{ij}$ for all
$ij \in M$. Hence $p_i+q_j=\min \{p'_i,p''_i\}+\max \{q'_j,q''_j\}=v_{ij}$ for all $ij \in M$, proving that $p$ is market
clearing as well. \qed
\end{proof}

Once we have computed market clearing prices $p$ by price raising as described above, we may further reduce them to
 unique buyer optimal prices in a similar way. The following observation is useful in this context (one direction is
proved in \cite{EK10}):

\begin{lemma}\label{buyeropt}
Let $p\in R^n_+$ be market clearing and $M \subseteq E^=$ be a corresponding perfect matching in the corresponding 
equality graph $G^=$. Then the following are equivalent:\\
$(i)$ $p$ is buyer optimal\\
$(ii)$ Each $j\in J$ is joined by an $M$-alternating path $P \subseteq E^=$, starting and ending with a matching edge,
to an item $i \in I$ with $p_i=0$.
\end{lemma}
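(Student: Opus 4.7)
The plan is to introduce a single auxiliary notion that makes both directions transparent. Call a nonempty set $S \subseteq I$ \emph{reducible} if $p_i > 0$ for every $i \in S$ and $S$ satisfies the closure property: whenever $i \in S$ and $ij \in E^=$, the $M$-partner of $j$, which I write $M(j)$, also lies in $S$. The first step is to verify that a reducible $S$ exists iff $p$ is not buyer optimal. Given such $S$, decrease $p_i$ by a small $\epsilon > 0$ for every $i \in S$ and raise $q_j$ by $\epsilon$ for every $j$ with $M(j) \in S$. This preserves $p_i + q_j = v_{ij}$ on $M$, so $M$ remains an equality matching, and the closure condition kills the only case that would otherwise violate dual feasibility (an $E^=$ edge from some $i' \in S$ to a buyer whose $M$-partner lies outside $S$); choosing $\epsilon$ small enough absorbs slack in all other constraints and keeps $p \ge 0$. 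So (i) fails iff a reducible set exists, and it remains to match this with the negation of (ii).

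For (i)$\Rightarrow$(ii) I argue contrapositively. Suppose some $j_0 \in J$ admits no alternating path as in (ii), and let $S$ be the set of items reachable from $j_0$ via an $M$-alternating path in $E^=$ starting and ending with a matching edge. The trivial length-one path shows $M(j_0) \in S$, and by hypothesis $p_i > 0$ on $S$. Closure follows by splitting on whether an edge $ij \in E^=$ with $i \in S$ lies in $M$: if yes, $M(j) = i \in S$; if no, either $M(j)$ was already visited on the path reaching $i$ and so lies in $S$, or one can append $ij$ and the matching edge leaving $j$ to produce an alternating path to $M(j)$. Hence $S$ is reducible, contradicting (i).

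For (ii)$\Rightarrow$(i), suppose $p$ is not buyer optimal; pick market-clearing $p'$ with $\sum_i p'_i < \sum_i p_i$. Theorem~\ref{ShSh} allows me to replace $p'$ by $\min(p,p')$, so I may assume $p' \le p$ coordinatewise. Let $S = \{i : p'_i < p_i\}$; it is nonempty with $p_i > 0$ throughout. For closure, I use the fact from the proof of Theorem~\ref{ShSh} that $M$, being a maximum-valuation perfect matching, is an equality matching for both $p$ and $p'$; hence $p_i - p'_i = q'_j - q_j$ on $M$. Combining this with feasibility $p'_i + q'_j \ge v_{ij}$ for any equality pair $ij \in E^=$ with $i \in S$ gives $q'_j > q_j$, which together with the $M$-equality at the edge from $j$ to $M(j)$ forces $p_{M(j)} > p'_{M(j)}$, i.e.\ $M(j) \in S$. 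So $S$ is reducible. Now pick any $i_0 \in S$, let $j_0$ be its $M$-partner, invoke (ii) to obtain an alternating path from $j_0$ to a zero-price item, and propagate closure step-by-step along this path: every item on it, including the zero-price endpoint, must lie in $S$, contradicting $p_i > 0$ on $S$.

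The main obstacle is the (ii)$\Rightarrow$(i) direction, where the reducible $S$ has to be extracted from an abstract comparison with a cheaper price vector rather than constructed combinatorially from alternating paths. Theorem~\ref{ShSh} plus the observation that $M$ is an equality matching for every market-clearing price vector are exactly what convert the global sum inequality $\sum p'_i<\sum p_i$ into the pointwise closure condition needed to pull the alternating path provided by (ii) all the way down to a zero-priced item and produce the contradiction.
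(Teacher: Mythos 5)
Your proof is correct and follows essentially the same route as the paper: your reducible set in the (i)$\Rightarrow$(ii) direction is exactly the paper's reachable set $R$ on which prices get lowered, and your closure propagation along the alternating path in (ii)$\Rightarrow$(i) is the paper's $\epsilon$-chain obtained from complementary slackness of $M$ with the cheaper dual solution. The only cosmetic difference is the detour through Theorem~\ref{ShSh} to arrange $p'\le p$, which is not actually needed: the set $\{i : p'_i<p_i\}$ is already nonempty, positively priced, and closed for the original $p'$.
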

\begin{proof}
``$\Rightarrow$'': Assume $p$ is buyer optimal and $j\in J$. Clearly, $p_{\min}= \min p_i = 0$ -- otherwise $p$ cannot be
buyer optimal (decrease all $p_i$ by $p_{\min}$). Let $R \subseteq I$ denote the set of items $i\in I$ that can be reached from $j$ along $M$-alternating paths $P \subseteq E^=$ from $j$, starting (and ending) with a matching edge. By 
definition of $R$, there is no edge in $E^=$ joining $R$ to any $j \in J\backslash M_R$, where again $M_R \subseteq J$ denotes the set of matching partners of $R$. Hence, in case $p >0$ on $R$, we may decrease $p$ on $R$ and increase $q$ on 
 $M_R$ while staying dually feasible and complementary with $M$. This would contradict the assumed minimality of $p$.\\
``$\Leftarrow$'': Assume that $(ii)$ holds, but $p$ is not buyer optimal, \emph{i.e.,} there exists a market clearing 
$\tilde{p}$ with $\tilde{p}_{i_0} <p_{i_0}$. Let $j_0$ be the matching partner of $i_0$ and let $P= j_0,i_0,j_1,i_1, ..., j_k,i_k$ be the $M$-alternating path whose existence is guaranteed by $(ii)$. As $\tilde{p}$ with corresponding $\tilde{q}$ is an optimal dual solution, complementarity with (the optimal primal) $M$ implies 
$$\tilde{p}_{i_0}+\tilde{q}_{j_0}=v_{i_0,j_0},~ ...,~ \tilde{p}_{i_k}+\tilde{q}_{j_k}=v_{i_k,j_k}.$$
Hence dual feasibility of $(\tilde{p},\tilde{q})$ yields
$$\tilde{p}_{i_0} \le p_i-\epsilon \Rightarrow \tilde{q}_{j_1} \ge q_i+\epsilon \Rightarrow \tilde{p}_{i_1} \le p_{i_1} - \epsilon \Rightarrow ... \Rightarrow \tilde{p}_{i_k} \le p_{i_k} -\epsilon = -\epsilon <0,$$
a contradiction. \qed
\end{proof}

Lemma \ref{buyeropt} suggests the following algorithm to reduce a given market clearing $p$ (with corresponding $q$ and any fixed maximum valuation matching $M$) towards buyer optimality: First decrease all prices uniformly until $p_i=0$ for some item $i$.  Check whether 
condition $(ii)$ in Lemma \ref{buyeropt} is satisfied. If yes, we are done. Otherwise pick any $j \in J$ that violates 
condition $(ii)$ and let $R \subseteq I$ denote the set of items that can be reached from $j$ along $M$-alternating 
paths starting (and ending) with a matching edge. By assumption, $p >0$ on $R$. Furthermore, there are no edges in $E^=$ joining $R$ to $J\backslash M_R$. Hence decreasing $p$ on $R$ and increasing $q$ on $M_R$ maintains feasibility and
hence optimality of the given dual solution (note that $\sum p_i + \sum q_j$ remains constant). We therefore decrease $p$ on $R$  until eventually a new edge $e$ joining $R$ to $J\backslash M_R$ enters $E^=$, in which case we update (enlarge) the current set $R \subseteq I$ and continue. After at most $n=|I|$ steps, $j$ will get connected to an item $i$ with
$p_i=0$ as required in condition $(ii)$ and the iteration is complete. We finish our description of the algorithm by 
opbserving that, while proceeding this way (trying to connect a current $j$ as above to the zero price items), we
 never destroy any $M$-alternating paths connecting other buyers $j'$ to zero price items. Indeed if any such path $P$
 would pass through $i\in R$ as above, then also $j$ were connected to zero price items (follow the $M$-alternating path from $j$ to $i$ and then switch to $P$). Thus, indeed, after at most $n=|I|$ iterations we will eventually reach an
optimal dual solution $(p,q)$ satisfying condition $(ii)$ in Lemma \ref{buyeropt}, \emph{i.e.,} buyer optimal prices $p$.\\

In section \ref{s-VCG} we will see how buyer optimal prices arise in a natural (and at the first glance  different) way
in the context of sealed bid auctions.

\section{VCG}\label{s-VCG}
The general idea behind VCG mechanisms is that items should be assigned to buyers so as to maximize the total valuation
(``social welfare'') and that prices are determined so that each buyer $j$ pays for the ``harm he does to the others''
by taking the item (or, in more general situations, the goods) assigned to him. More precisely, let $M$ be any 
maximum valuation matching of value $v^*=v(M)$. For any $ij \in M$ compute $v_{-j}:= v^*(G\backslash j)$, the maximum 
valuation of a matching when bidder $j$ is removed or (to maintain our assumption $|I|=|J|$) replaced by a dummy
bidder with  valuation $0$ for all items.\\

 We then compare this to 
$v_{-j}^{-i}:= v^*(G\backslash \{i,j\})$, the maximum valuation if buyer $j$ has left and taken item $i$. The difference
between these two, \emph{i.e.,} $v_{-j}-v_{-j}^{-i}$ is the harm that buyer $j$ does to the others if he takes away item
$i$ and, in VCG-terminology, this is the \emph{personalized} price $p_{ij}$ of item $i$ \emph{for} buyer $j$. The VCG 
mechanism then assigns items to buyers according to the optimal matching $M$ and asks each buyer $j$ to pay his
price $p^{VCG}_i:=p_{ij}$ for the item $i$ he is assigned to.

We claim that the VCG prices $p^{VCG}_i$ defined this way are exactly the buyer optimal market clearing prices (\emph{cf.}
also section 15.9 in \cite{EK10}).

\begin{lemma}\label{buyeropt=VCG}
Let $p$ be buyer optimal market clearing prices and $M$ be any maximum valuation matching. Then
$p_i=v_{-j}-v_{-j}^{-i}$ holds for all $ij \in M$.
\end{lemma}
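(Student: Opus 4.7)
The plan is to compute $v_{-j}$ and $v_{-j}^{-i}$ separately, in each case matching an LP-dual upper bound with a primal construction, and then to subtract. Fix an optimal dual pair $(p,q)$ with $p$ buyer optimal, so that $v^* = \sum_{i'} p_{i'} + \sum_{j'} q_{j'}$ and, by complementary slackness with $M$, $v_{i'j'} = p_{i'} + q_{j'}$ for every $i'j' \in M$; in particular $v_{ij} = p_i + q_j$.

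The quantity $v_{-j}^{-i}$ is the easier one. Dually, setting $(\tilde p, \tilde q)$ equal to $(p,q)$ except $\tilde p_i = \tilde q_j = 0$ yields a feasible dual for $G \setminus \{i,j\}$ (the constraints involving the zero-valued dummies reduce to $p,q \ge 0$), so $v_{-j}^{-i} \le v^* - p_i - q_j$. Primally, removing the edge $ij$ from $M$ and matching the dummy item with the dummy buyer gives a perfect matching of value $v^* - v_{ij} = v^* - p_i - q_j$, meeting the bound.

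For $v_{-j}$ the analogous dual modification ($\tilde q_j = 0$, keep $p$) gives $v_{-j} \le v^* - q_j$, and the primal construction is where buyer optimality enters. Lemma~\ref{buyeropt} supplies an $M$-alternating path $P\colon j = j_0, i_0 = i, j_1, i_1, \ldots, j_k, i_k$ in $E^=$, starting and ending with a matching edge, with $p_{i_k} = 0$. Build a matching $M_{-j}$ in $G \setminus j$ by keeping $M$ off $P$, deleting the path matching edges $(j_\ell, i_\ell)$ for $\ell = 0, 1, \ldots, k$, inserting the shift edges $(i_{\ell-1}, j_\ell)$ for $\ell = 1, \ldots, k$, and pairing the dummy $j$ with the now-orphaned $i_k$ (value $0$). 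Since every edge of $P$ lies in $E^=$, substituting $v_{i_\ell j_\ell} = p_{i_\ell} + q_{j_\ell}$ and $v_{i_{\ell-1} j_\ell} = p_{i_{\ell-1}} + q_{j_\ell}$ makes the net change in valuation telescope to $-p_{i_k} - q_{j_0} = -q_j$, so $v(M_{-j}) = v^* - q_j$ and the dual bound is tight.

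Subtracting yields $v_{-j} - v_{-j}^{-i} = (v^* - q_j) - (v^* - p_i - q_j) = p_i$, as claimed. The main obstacle is the construction of $M_{-j}$: deleting $j$ orphans $i = i_0$, and only the alternating path from Lemma~\ref{buyeropt} lets us reroute the matching so that the orphan propagates along $P$ to the zero-price end $i_k$, where it is harmlessly absorbed by the dummy, producing exactly the deficit $q_j$ predicted by the dual.
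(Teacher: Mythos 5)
Your proof is correct and follows essentially the same route as the paper's: both reduce the claim to the two identities $v_{-j}^{-i}=v^*-v_{ij}$ and $v_{-j}=v^*-q_j$, with the latter obtained by rerouting $M$ along the alternating path to a zero-priced item supplied by Lemma~\ref{buyeropt}. The only (harmless) difference is that you certify optimality of the constructed matchings via explicit dual upper bounds and a telescoping sum, where the paper invokes complementary slackness with $(p,\bar q)$ directly.
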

\begin{proof}
 As $M$ is a maximum valuation matching, so is $M\backslash\{i,j\}$ for the
subgraph $G\backslash \{i,j\}$ for any $ij \in M$. In other words, $v_{-j}^{-i}=v^*-v_{ij}$ for $ij\in M$.\\

Secondly, observe what happens if we replace $j$ by a dummy bidder: Let  $q \in \R^n$ correspond to $p$, so that
 $v^*=\sum p_i+\sum q_j$. By Lemma \ref{buyeropt}, there exists an $M$-alternating path $P$ from $j$ to some zero priced
item $i$.

Construct a new matching $\bar{M}$ by  switching $M$ along $P$ (thereby unmatching $i$ and $j$) and matching $i$ and $j$ with each other. Correspondingly, obtain $\bar{q}$ from $q$ by decreasing $q_j$ to $0$. Then  $\bar{M}$  is (still) complementary with $(p,\bar{q})$  and hence optimal. Thus  $v_{-j}=v^*-q_j$. Consequently, the VCG price that $j$
has to pay for the item $i$ that is assigned to him equals 
$$p^{VCG}_i=v^*-q_j -(v^*-v_{ij})=v_{ij}-q_j=p_i$$
as claimed. \qed.
\end{proof}

VCG prices have the desirable property of being \emph{incentive compatible}: Buyers cannot profit by lying 
about their true valuations $v_{ij}$. Hence this is true for the mechanism thatcomputes (and posts)  buyer optimal
 market clearing prices (and assigns items to buyers according to a maximum value perfect matching $M$). We include
a proof for convenience of the reader.

\begin{theorem}\label{incentive}
Assigning items to buyers at (posted) buyer optimal market clearing prices is incentive compatible.
\end{theorem}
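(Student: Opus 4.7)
The plan is to reduce the theorem immediately to the standard incentive-compatibility argument for VCG, using Lemma \ref{buyeropt=VCG} as the bridge. Fix a buyer $j$ with true valuations $v_{ij}$ and suppose he reports $\tilde{v}_{ij}$, while all other buyers report truthfully. Let $\tilde{M}$ be a maximum valuation perfect matching computed from the reported data, and suppose $\tilde{M}$ assigns item $i^*$ to buyer $j$. The posted (buyer optimal market clearing) price for $i^*$ is, by Lemma \ref{buyeropt=VCG}, equal to $v_{-j} - v_{-j}^{-i^*}$.

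The first key observation I would make is that both $v_{-j}$ and $v_{-j}^{-i^*}$ are computed after $j$ has been removed (replaced by a dummy), so they depend only on the valuations of the \emph{other} buyers, and not on $j$'s report. Hence the true profit of buyer $j$ equals
\[
v_{i^* j} - \bigl(v_{-j}-v_{-j}^{-i^*}\bigr) \;=\; \bigl(v_{i^* j} + v_{-j}^{-i^*}\bigr) - v_{-j}.
\]
Since $v_{-j}$ is a constant (from $j$'s point of view), maximizing true profit over strategies is equivalent to maximizing $v_{ij}+v_{-j}^{-i}$ over possible item assignments $i$.

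The second step is to identify $v_{ij}+v_{-j}^{-i}$ with the maximum true value $v(M)$ over all perfect matchings $M$ containing the edge $ij$; this is immediate, as deleting $ij$ from such an $M$ yields a perfect matching in $G\setminus\{i,j\}$ and conversely. Consequently, $\max_i (v_{ij}+v_{-j}^{-i}) = v^*$, the true overall optimum, and the maximizers are exactly those items $i$ for which some maximum valuation matching (under true valuations) pairs $j$ with $i$.

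Putting the pieces together: if $j$ reports truthfully, the mechanism selects a maximum valuation matching, which necessarily assigns $j$ to some item $i$ achieving $v_{ij}+v_{-j}^{-i}=v^*$, giving him true profit $v^*-v_{-j}$. If $j$ misreports, the mechanism may assign him some other item $i^*$ with $v_{i^* j}+v_{-j}^{-i^*}\le v^*$, so his profit can only weakly decrease. This proves truthtelling is a dominant strategy. The only subtle point, and the place I would be most careful, is making sure the ``independence of the price on $j$'s report'' step is stated cleanly: the price formula $v_{-j}-v_{-j}^{-i}$ inherited from Lemma \ref{buyeropt=VCG} must be interpreted with respect to the \emph{reported} profile, and I need to note that deleting $j$ removes precisely the row of the valuation matrix that $j$ controls, so the formula coincides with its true-valuation analogue.
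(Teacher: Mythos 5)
Your proposal is correct and follows essentially the same route as the paper: both reduce the claim to the standard VCG dominant-strategy argument by using Lemma \ref{buyeropt=VCG} to identify the posted buyer optimal price with $v_{-j}-v_{-j}^{-i}$, observe that these quantities do not depend on $j$'s report, and conclude that the truthful assignment maximizes $v_{ij}+v_{-j}^{-i}$ over attainable items. Your phrasing of the key inequality (that $v_{ij}+v_{-j}^{-i}$ is the value of the best perfect matching through $ij$, hence maximized at $v^*$) is in fact slightly more careful than the paper's appeal to $v(M)\ge v(\tilde{M})$, but the content is identical.
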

\begin{proof}
Assume that buyer $j$, when reporting his valuations $v_{ij}$ truthfully gets assigned to item $i$ at a price of $p_i=
v_{-j}-v_{-j}^{-i}$ (according to Lemma \ref{buyeropt=VCG}). We compare his profit $q_j=v_{ij}-p_i$ to the outcome in
 case he reports valuations $\tilde{v}_{ij}, i \in I$, and, as a consequence, gets assigned to item $\tilde{i}\in I$ 
(possibly different from $i$). The resulting price is then (again according to Lemma \ref{buyeropt=VCG}) $\tilde{p}_i=
\tilde{v}_{-j}-\tilde{v}_{-j}^{-i}$, where the tilde indicates that the corresponding values are computed based on valuations $\tilde{v}_{ij}, i\in I$ of player $j$.\\

Let $M$ and $\tilde{M}$ denote the corresponding assignments computed \emph{w.r.t.} given valuations $v$ and $\tilde{v}$, \emph{resp.} Thus $ij\in M$ and $\tilde{i}j \in \tilde{M}$. As $M$ is a maximum value matching \emph{w.r.t.} $v$, we
have $v(M) \ge v(\tilde{M})$, \emph{i.e.,}
$$ v_{ij}+v_{-j}^{-i} \ge v_{\tilde{i}j}+v_{-j}^{-\tilde{i}}=v_{\tilde{i}j}+\tilde{v}_{-j}^{-\tilde{i}}.$$
(The last equation is due to the fact that $v_{-j}^{-\tilde{i}}$ does not depend on the valuations of $j$.) Hence the
price that $j$ were to pay for item $\tilde{i}$ (when he misreports his valuations as $\tilde{v}_{ij}$) is, according
to Lemma \ref{buyeropt=VCG},
$$p_{\tilde{i}}= \tilde{v}_{-j}-\tilde{v}_{-j}^{-\tilde{i}}=v_j-\tilde{v}_{-j}^{-\tilde{i}}\ge v_{-j}-(v_{ij}+v_{-j}^{-i}
-v_{\tilde{i}j}) =p_i-v_{ij}+v_{\tilde{i}j},$$
showing that $v_{\tilde{i}j}-\tilde{p}_i \le v_{ij}-p_i$. So misreporting his valuations cannot increase his profit. \qed
\end{proof}

\bibliographystyle{plain}

\end{document}